\newtheorem{theorem}{Theorem}
\newtheorem{corollary}{Corollary}[theorem]
\theoremstyle{definition}
\theoremstyle{remark}
\newtheorem{remark}{Remark}
\begin{document}
\doublespacing

\title{Sparsity Promoting Reconstruction of Delta Modulated Voice Samples by Sequential Adaptive Thresholds}

\author{$^\ast$ 
	Mahdi Boloursaz Mashhadi, Saber Malekmohammadi, Student Members, IEEE, and Farokh Marvasti, Senior Member, IEEE

\thanks{$^\ast$ Mahdi Boloursaz Mashhadi and Farokh marvasti are with the Advanced Communications Research Institute (ACRI), EE Department, Sharif University of Technology (SUT), Tehran, Iran. Saber Malekmohammadi is with the EE Department, University of Waterloo, ON, Canada. email: boloursaz@ee.sharif.edu.}
\thanks{}}

\maketitle


\begin{abstract}
 In this paper, we propose the family of Iterative Methods with Adaptive Thresholding (IMAT) for sparsity promoting reconstruction of Delta Modulated (DM) voice signals. We suggest a novel missing sampling approach to delta modulation that facilitates sparsity promoting reconstruction of the original signal from a subset of DM samples with less quantization noise. Utilizing our proposed missing sampling approach to delta modulation, we provide an analytical discussion on the convergence of IMAT for DM coding technique. We also modify the basic IMAT algorithm and propose the Iterative Method with Adaptive Thresholding for Delta Modulation (IMATDM) algorithm for improved reconstruction performance for DM coded signals. Experimental results show that in terms of the reconstruction SNR, this novel method outperforms the conventional DM reconstruction techniques based on lowpass filtering. It is observed that by migrating from the conventional low pass reconstruction technique to the sparsity promoting reconstruction technique of IMATDM, the reconstruction performance is improved by an average of 7.6 dBs. This is due to the fact that the proposed IMATDM makes simultaneous use of both the sparse signal assumption and the quantization noise suppression effects by smoothing. The proposed IMATDM algorithm also outperforms some other sparsity promoting reconstruction methods.
\end{abstract}
\begin{IEEEkeywords} 
Delta Modulation (DM) Voice Coding, Adaptive Delta Modulation (ADM), Missing Sampling, Sparsity Promoting Reconstruction.
\end{IEEEkeywords}

\section{introduction}
Uniform sampling is a prior technique for Analog to Digital (A/D) conversion which samples the signal amplitudes at uniformly spaced time instances. If the Shannon-Nyquist criteria is met by the sampling process, then perfect reconstruction of the original analog signal from noiseless uniform samples is guaranteed\cite{shannon, jerri}. Alternative nonuniform sampling approaches to A/D conversion such as Level Crossing Sampling \cite{MYLC, MYPATENT}, Delta Modulation (DM) \cite{nusampling, Drmarvasti2001nonuniform} and Sigma-Delta Modulation techniques \cite{Drmarvasti2001nonuniform} are also commonly used. Delta modulation \cite{DMMOD} and its improved version Adaptive Delta Modulation (ADM) \cite{ADMIET} are major waveform coding techniques that have found widespread applications in current voice and video coding systems\cite{Zhu_paper, ADM, app, Reza, Soheil}.\\

Delta modulation is characterized by the fact that each sample is coded by a single binary symbol which leads to a rather simple hardware implementation \cite{DMMOD}. This technique adds modulation noise \cite{noise} to the original signal. This noise may either be due to the slope overloading phenomenon during rapid jumps/falls of the signal or due to the oscillations of the modulator output during slowly varying portions of the signal which is called the granular noise \cite{qnoise}, \cite{grannoise}.\\

Efficient reconstruction of the original signal from DM samples is a major issue. The conventional reconstruction method is based on low pass filtering the modulator output at the demodulator \cite{Drmarvasti2001nonuniform},\cite{lprecons}. However, in this research, we migrate to the emergent theory of sparse signal processing \cite{Dramini} for reconstructing DM coded signals. To this end, we propose the missing sampling approach to Delta Modulation in which the more accurate DM samples taken from the signal during low slope portions are used to recover the original signal utilizing a specific sparse reconstruction technique. We show by simulations that by replacing the conventional low pass signal assumption with sparse signal assumption, we achieve an average improvement of 7.6 dBs in reconstructing voice signals from DM coded samples. This is due to the fact that the sparse reconstruction approach is capable of reconstructing the less significant but still considerable high frequency components that are commonly discarded by the low pass reconstruction techniques. To achieve more comprehensive results, we consider both conventional and adaptive delta modulation sampling schemes. We compare the performance of different sparse reconstruction algorithms (OMP \cite{Omp}, LASSO \cite{lasso} and IMAT \cite{IMAT}) in the proposed scenario and observe that the Iterative Method with Adaptive Tresholding (IMAT) algorithm achieves superior results. Further we modify the basic IMAT and propose the IMATDM algorithm for improved reconstruction performance in DM scenario. Finally, we analytically prove convergence of the IMAT algorithm for DM reconstruction scenario and derive the final reconstruction SNR in presence of the quantization noise.\\

The rest of paper is organized as follows. Section II explains the proposed missing sampling approach to delta modulation. In section III we explain the basic IMAT algorithm and propose its modified version IMATDM for improved reconstruction performance in the DM scenario. In section IV we analytically prove the convergence of IMAT in the proposed DM scenario. In section V, we discuss the simulation results to support our previous claims. Finally, we conclude this work in section VI.

\section{the missing sampling approach to delta modulation}
Delta Modulation (DM) is a basic analog-to-digital conversion technique used for efficient coding of voice signals. DM is the simplest form of differential pulse code modulation (DPCM) where the difference between successive samples are encoded to bit streams. In delta modulation, the transmitted samples are reduced to 1-bit symbols. Figure \ref{DMblock} gives the block diagram for the basic delta modulation/demodulation technique.

\begin{figure}[!ht]
	\centering
	\includegraphics[width=0.5\columnwidth,,height=0.375\linewidth]{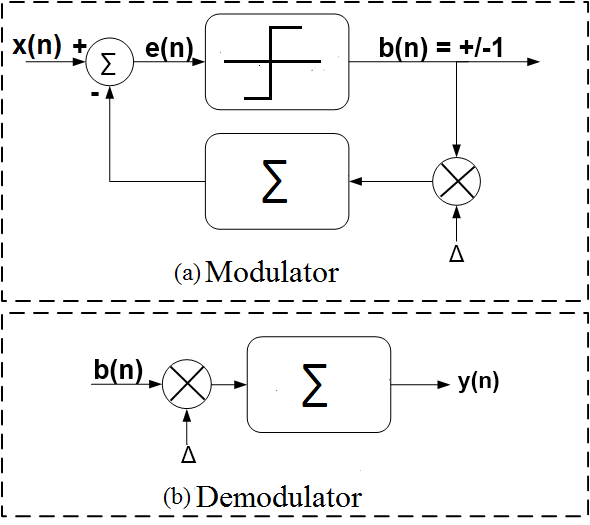}
	\caption{Block Diagram of the basic Delta Modulator/Demodulator}
	\label{DMblock}
\end{figure}

Fig \ref{fig:modulation}.(a) depicts a typical voice frame $x$ and its corresponding DM output $y$. As observed in this figure, during low activity periods where the input signal varies slowly with time, the DM output signal oscillates around the input. In contrary, in regions with slopes of higher absolute values where the input signal experiences rapid jumps/falls, DM output promotes or postpones the input. In both cases a coding error is introduced to the original signal. However, in the first case, DM samples provide a better approximation of the input signal with a maximum error of $\frac{\Delta}{2}$. 

\begin{figure}[!ht]
	\centering
	
	\subfigure[Delta Modulation ]
	{
		\includegraphics[width=0.45 \columnwidth ]{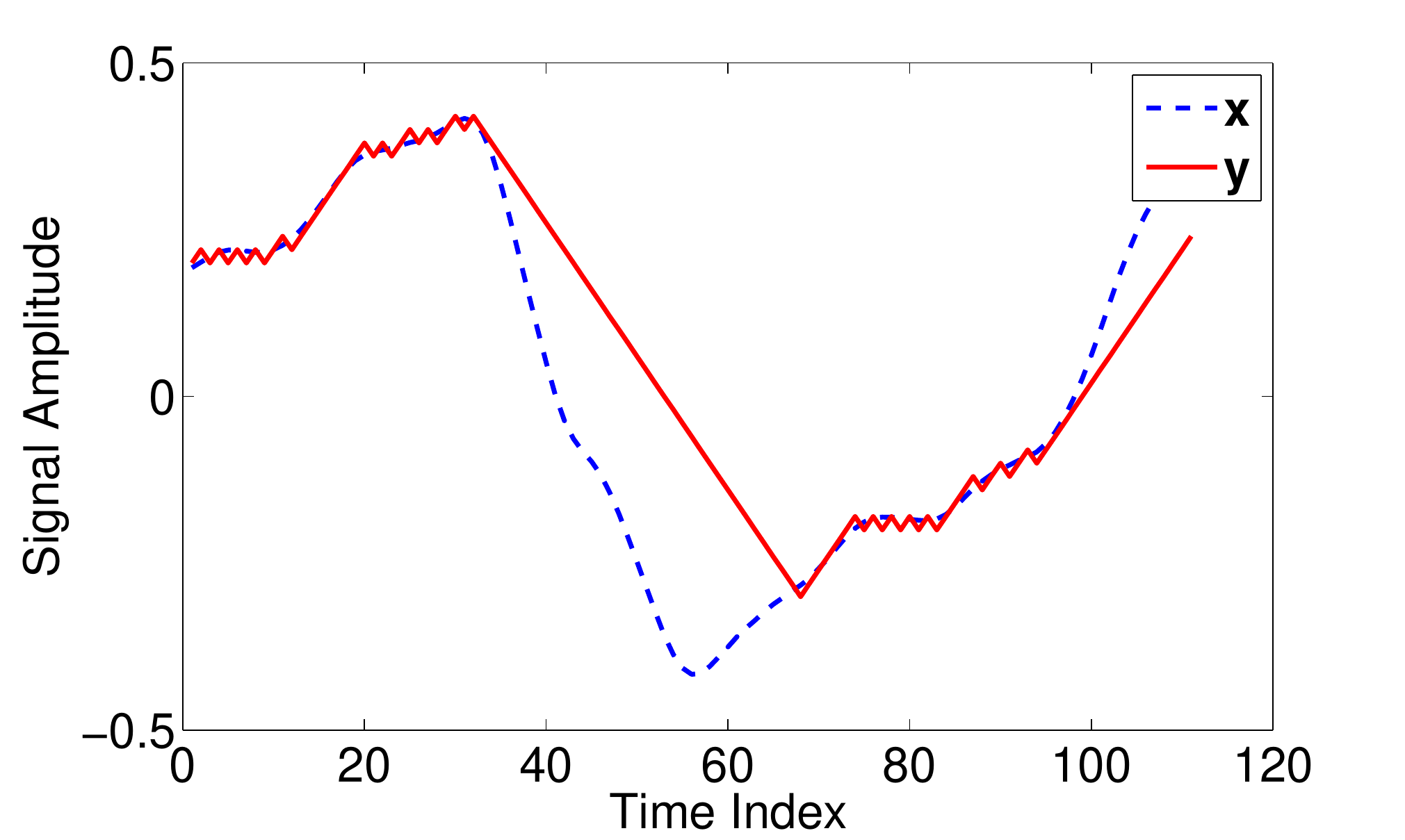}
		\label{fig:FDMsavingCW1}
	}
	\centering	
	\subfigure[The Missing Sampled DM Signal]
	{
	\includegraphics[width=0.45 \columnwidth]{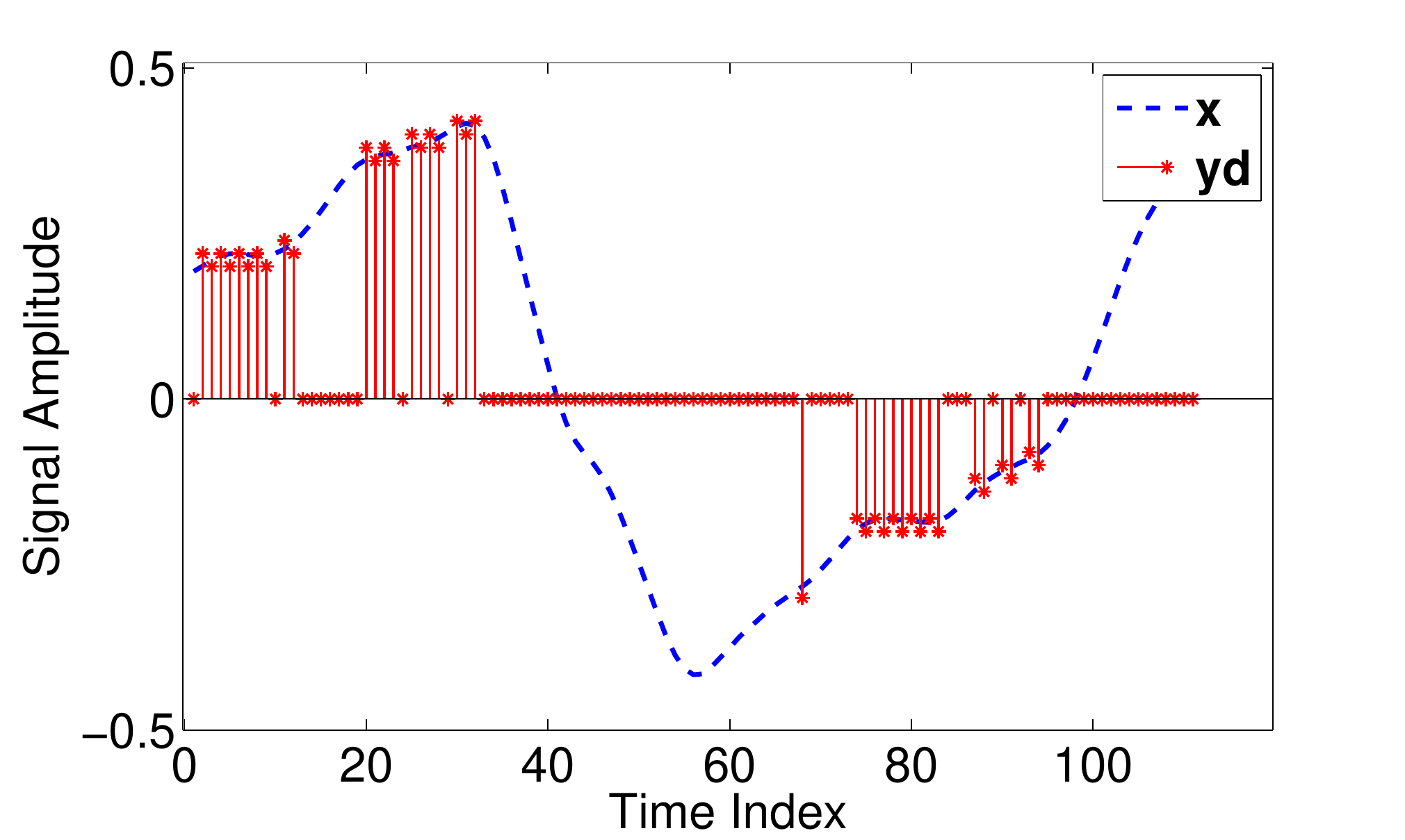}
	\label{fig:FDMsavingCW2}
	}

	\caption{The Missing Sampling Approach to Delta Modulation}
	\label{fig:modulation}
\end{figure}
 
Since emergence, DM has been accompanied by subsequent low pass filters for reconstruction. Low pass filtering enforces the presumption that the underlying signal is either inherently low pass or is low pass filtered by the analog anti-aliasing filter prior to A/D conversion. The prior anti-aliasing filter omits the less significant but still considerable high frequency components from the input signal and hence degrades the reconstruction performance of A/D conversion. To avoid this phenomenon, we utilize techniques from the emergent theory of sparse reconstruction. To utilize the sparse reconstruction techniques, we need to provide a random sub-sample of the original signal for reconstruction. In this section, we propose a novel missing sampling approach to delta modulation to achieve this goal.

In the proposed approach, the more accurate DM samples taken in low slope regions of the input signal are utilized for its reconstruction during high activity regions. In other words, the less accurate samples taken during jumps/falls are discarded (missing samples) but estimated by the proposed reconstruction algorithm from the more accurate samples. Hence, in the proposed missing sampling model, only the oscillating samples are used for signal reconstruction. These oscillations are derived from the sign alternations in the delta modulator output sequence $b(n)$. In fact, the sampling mask $d(n)$ is extracted from $b(n)$ according to (\ref{eq:mask})

\begin{align} \label{eq:mask}
d(n)=\left\{
\begin{array}{lll}
1 & if & b(n).b(n+1)=-1\\
0 & & otherwise
\end{array}
\right.
\end{align}

Fig \ref{fig:modulation}. b depicts the missing sampled signal $y_d(n)=d(n).y(n)$ corresponding to the typical voice frame in Fig \ref{fig:modulation}. For further simplification of the analytical studies, we assume that $y_d(n)$'s are independent identically distributed (iid) random variables that take one of the three values $0$, $x(n)+\Delta/2$ and $x(n)-\Delta/2$ with probabilities $1-p$, $p/2$ and $p/2$ respectively. In this formulation, $p$ denotes the sampling rate defined as the ratio of the ones in $d(n)$ to its total length. Equivalently, we can write $y_{d}(n)=d(n)(x(n)+q(n))$ in which the coding error $q(n)$'s are assumed iid random variables of Bernoulli distribution that take the values $\Delta/2$ and $-\Delta/2$ with equal probabilities. $d(n)$'s are also iid random variables of $Bernoulli(p)$ distribution and $q(n)$'s are assumed independent of the sampling mask elements $d(n)$'s.

It is noteworthy that the $\Delta$ parameter plays a key role in our proposed model. A larger $\Delta$ value gives a higher sampling ratio (i.e. the ratio of the reliable samples utilized for reconstruction to the total number of samples) but introduces more error to the samples utilized for reconstruction.

Finally, it should be noted that, there are different variants to the basic DM coding technique  \cite{ADMIET}, \cite{ADM}, \cite{Prosalentis2}, etc. These variants mostly improve over the basic DM by proposing novel ideas to decrease the quantization noise at the modulator. Hence, although these techniques provide more accurate DM samples, but they still utilize the same low pass filtering stage for demodulation. Our proposed sparsity promoting reconstruction algorithms are applicable to any variant of the DM technique to improve the output signal quality. To confirm this claim, we also expand our proposed missing sampling model to the Adaptive Delta Modulation (ADM) scheme.  In this scheme the $\Delta$ parameter is adapted to the variations of the input signal. During low activity periods of the input, $\Delta$ is decreased exponentially to achieve more accurate samples with less coding error. Alternatively, in regions where DM output is promoted or postponed, $\Delta$ is increased exponentially in order to faster reach the input signal.

\section{sparsity promoting reconstruction by IMAT and IMATDM}
 
To the best of the authors' knowledge, this is the first work in the literature that considers DM coding in the sparse reconstruction paradigm. In fact, prior researches have considered the underlying signal to be low pass or band-limited \cite{Drmarvasti2001nonuniform},\cite{lprecons} but not spectrally sparse. In the this section, we demonstrate our sparsity promoting reconstruction technique namely, the Iterative Method with Adaptive Thresholding (IMAT) and its improved version IMAT for DM reconstruction (IMATDM) algorithms in Subsection A. Later in Subsection B,  compares the structures of the sparsity promoting reconstruction methods of IMAT and IMATDM with the traditional lowpass filtering technique for reconstruction of DM voice samples. Later in section III, it is shown by extensive simulations that the proposed sparsity promoting IMAT and IMATDM algorithms not only outperform the conventional low pass reconstruction techniques, but also outperform other sparsity promoting reconstruction techniques of OMP, and Lasso.\\

\subsection{IMAT and IMATDM algorithms}

The basic IMAT algorithm was originally proposed for sparse signal reconstruction from missing samples in \cite{IMAT}. IMAT proved to outperform Orthogonal Matching Pursuit (OMP) and Iterative Hard Thresholding (IHT) algorithms in some applications regarding performance or complexity \cite{IMAT, Ashi1}. The basic IMAT algorithm progressively extracts the sparse signal components by iterative thresholding of the estimated signal promoting sparsity. The reconstruction formula for IMAT is given by (\ref{eq:IMAT})

\begin{align}\label{eq:IMAT}
	\text{IMAT:}\quad x_{k+1}(n)=\lambda y_d(n)+ \left(1-\lambda d(n)\right)T(x_k)(n)
\end{align}

In \eqref{eq:IMAT}, $x(n)$ and $x_k(n)$ denote the original signal and its reconstructed version at the $k$'th algorithm iteration and  $\lambda$ is the relaxation parameter that controls the convergence rate of the algorithm. $T(.)$ denotes the thresholding operator and $d(n)$ is the binary missing sampling mask extracted according to (\ref{eq:mask}). 

\begin{align}\label{eq:rabete1}
	d(n) &= \sum_{i=1}^{pN}\delta(n-n_i)
\end{align}
In \eqref{eq:rabete1}, $\delta(.)$ denotes the Kronecker delta function, $N$ is the voice frame length, $0\leq n_i \leq N-1 $ denotes the missing sampling time instances and $0 \leq p \leq 1$ is the sampling rate.

The thresholding block transforms the input signal to the sparse domain (FFT, DCT, etc.), sets the signal components with absolute values below the threshold to zero and finally transforms back to the original time domain. The threshold value Th is decreased exponentially by $Th(k)=\beta e^{\alpha k}$ where $k$ is the iteration number. The algorithm performance is less dependent on the choice of the algorithm parameters $\beta$ and $\alpha$ but these parameters are optimized empirically for the best performance in simulations. The block diagram for the basic IMAT algorithm is given in fig. \ref{fig:IMATB}. DT is the Discrete Transform (the simple FFT transfrom in this research) and converts the input signal to the  sparse domain. Similarly, IDT is its inverse transform (IFFT).
\begin{figure}[h]
	\center{
	\includegraphics[width=0.45\columnwidth,,height=0.4\linewidth]{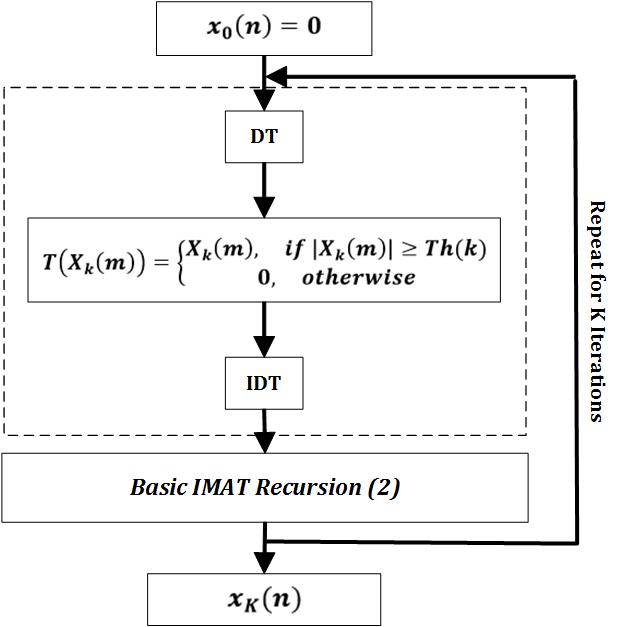}
	\caption{IMAT Block Diagram}
	\label{fig:IMATB}
}
\end{figure}

Having demonstrated the basic IMAT algorithm, we proceed with IMATDM. According to fig. \ref{fig:smooth}, IMATDM is obtained by adding a partial smoothing block prior to IMAT. This block performs a moving averaging of length $l$ on $y_d(n)$. As mentioned before, although DM samples are considered as a good approximation of the original signal during low activity intervals, they are still oscillating around the input in these regions and hence include a coding error $q(n)$. In fact, as investigated in section IV, reconstructing the signal from these noisy samples results in a noticeable performance degradation according to the power of the coding error which is proportional to $\Delta^2$. As these samples are oscillating around the original signal, it is obvious that a moving averaging of an even length $l$ causes the coding error terms to cancel out. On the other hand, as the original signal is slowly varying in these regions and does not experience sharp jumps/falls, the averaging operation does not cause much distortion. Applying this technique, more the samples used by IMAT for reconstruction are significantly more accurate and reconstruction leads to improved SNR values as observed by simulations in section V.

\subsection{Sparsity promoting vs. lowpass reconstruction}

\begin{figure}[!ht]
	\centering
	\subfigure[IMATDM Block Diagram]
	{	  
		\includegraphics[width=0.45\columnwidth]{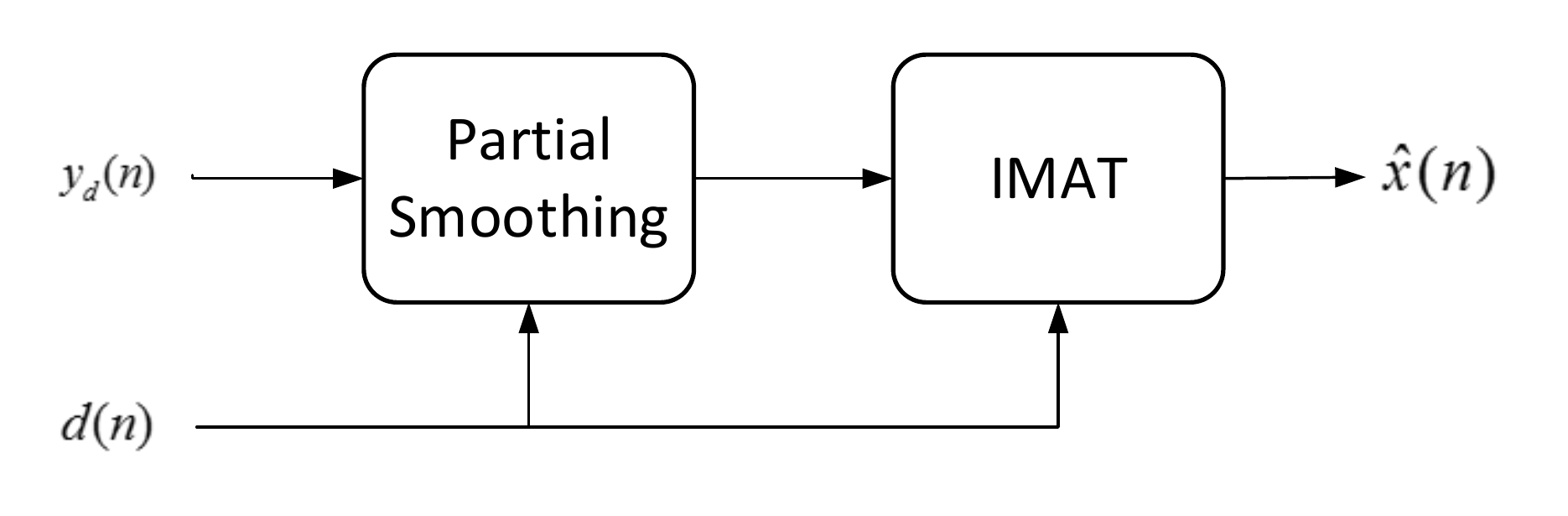}
		\label{fig:smooth}
	}
		\\
	\centering	
	\subfigure[The Conventional DM Reconstruction Method by Low pass Filtering]
	{
		\includegraphics[width= 0.4\columnwidth ]{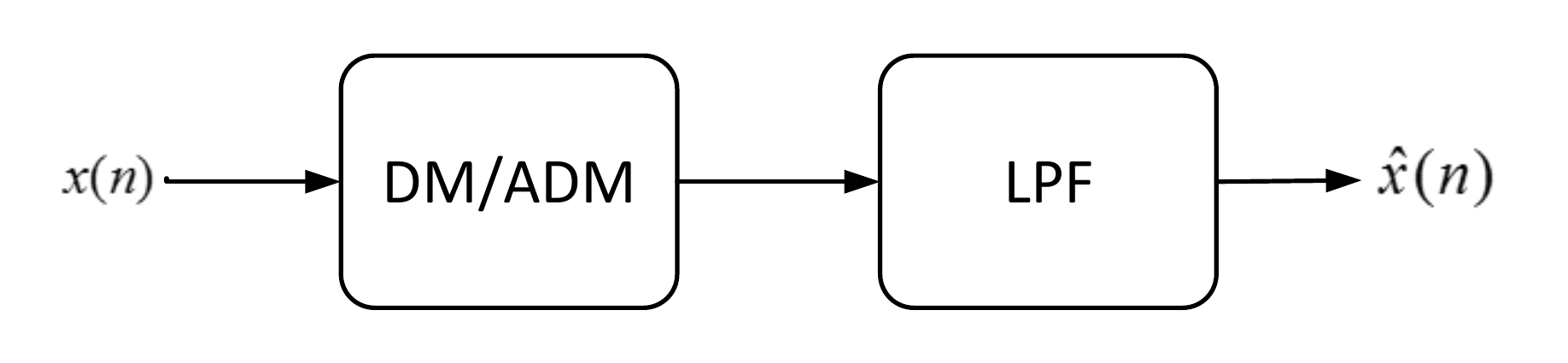}
		\label{fig:proposed}
	}
	\hspace{3mm}
	\centering
	\subfigure[The Proposed Method Using IMAT/IMATDM]
	{
		\includegraphics[width= 0.45\columnwidth]{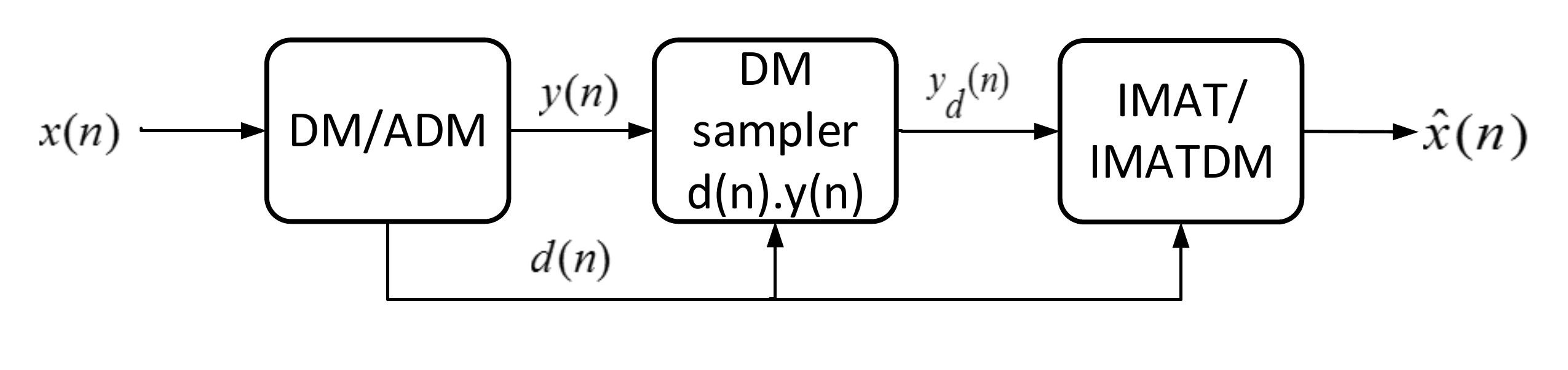}
		\label{fig:lpf}
	}
	\caption{The Conventional and the Proposed DM Reconstruction Techniques}
	\label{fig:comp}
	
\end{figure}

Fig. \ref{fig:comp} compares our proposed reconstruction method with the traditional low pass filtering techniques. As observed in this figure, the traditional reconstruction technique applies the low pass signal assumption. In order for this assumption to be true, the input signal is usually filtered by an antialiasing low pass filter prior to DM coding. This antialiasing filter omits the high frequency signal components and degrades the signal quality. However, the proposed sparsity promoting reconstruction algorithms of IMAT and IMATDM bypass this antialiasing filter distortion. Hence, IMAT/IMATDM algorithms improve the reconstruction quality by well preserving the high frequency signal components in the high quality input voice. This phenomenon is justified by simulations in section V.

\section{CONVERGENCE ANALYSIS}

In this section, we prove the convergence of IMAT and IMATDM algorithms for DM coding using our proposed missing sampling model. To this end, we need to prove the following theorem.  

\begin{theorem}{}\label{th:TH1}
The frequency transform of the delta modulation sampled signal $Y_{d}(m)=F\{y_{d}(n)\}$ is a random variable with mean and variance given by
\begin{align}\label{eq:TH1}
	E\{Y_{d}(m)\}&=pX(m)\nonumber
	\\E\{|Y_{d}(m)-pX(m)|^2\}&=(p-p^2)\epsilon_{x}+pN\frac{\Delta^2}{4}
\end{align}
in which $\epsilon_{x}=\sum_{n}{|x(n)|^{2}}$.
\end{theorem}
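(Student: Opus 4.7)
The plan is to start directly from the DFT definition $Y_d(m)=\sum_{n=0}^{N-1} y_d(n)\,e^{-j2\pi mn/N}$ combined with the stochastic model from Section II, namely $y_d(n)=d(n)\bigl(x(n)+q(n)\bigr)$ with $d(n)\sim\mathrm{Bernoulli}(p)$, $q(n)\in\{\pm\Delta/2\}$ with equal probabilities, and $\{d(n)\},\{q(n)\}$ independent across $n$ and of each other. The mean part is immediate: by linearity and independence, $E[y_d(n)]=E[d(n)]\bigl(x(n)+E[q(n)]\bigr)=p\,x(n)$ since $E[q(n)]=0$, and therefore $E[Y_d(m)]=\sum_n p\,x(n)\,e^{-j2\pi mn/N}=pX(m)$.

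For the variance claim I would center the quantity and write
\begin{equation*}
Y_d(m)-pX(m)=\sum_{n=0}^{N-1} u_n\,e^{-j2\pi mn/N},\qquad u_n:=d(n)\bigl(x(n)+q(n)\bigr)-p\,x(n),
\end{equation*}
noting that the $u_n$ are zero-mean and, by the iid assumptions on $d(n)$ and $q(n)$, mutually independent across $n$. Then expanding $E\{|Y_d(m)-pX(m)|^2\}$ as a double sum and exploiting $E[u_n\overline{u_{n'}}]=0$ for $n\neq n'$ collapses it to $\sum_n E[|u_n|^2]$, since the complex exponentials have unit modulus on the diagonal.

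The remaining step is a per-sample second moment computation. Using $d(n)^2=d(n)$ with $E[d(n)]=p$ and $E[q(n)^2]=\Delta^2/4$ (and again $E[q(n)]=0$, independence of $d$ and $q$), one gets
\begin{equation*}
E[|u_n|^2]=p\Bigl(|x(n)|^2+\tfrac{\Delta^2}{4}\Bigr)-p^2|x(n)|^2=(p-p^2)|x(n)|^2+p\tfrac{\Delta^2}{4}.
\end{equation*}
Summing over $n=0,\dots,N-1$ and recognizing $\sum_n |x(n)|^2=\epsilon_x$ gives the claimed variance $(p-p^2)\epsilon_x+pN\Delta^2/4$.

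There is no serious obstacle here; the proof is essentially bookkeeping. The one place to be careful is the cross-term cancellation: it relies on (i) the $d(n)$'s being iid across $n$, (ii) the $q(n)$'s being iid across $n$ and independent of $\{d(n)\}$, and (iii) $u_n$ being zero-mean. All three are given in the missing-sampling model, so the argument is self-contained once the model is invoked.
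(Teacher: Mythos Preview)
Your proof is correct and follows essentially the same route as the paper: linearity of expectation for the mean, then independence across $n$ to kill the off-diagonal cross terms in the double sum, leaving only $\sum_n E[|y_d(n)-px(n)|^2]$. The only cosmetic difference is that the paper computes this per-sample second moment by enumerating the three atoms of the distribution of $y_d(n)$, whereas you use the factorization $y_d(n)=d(n)(x(n)+q(n))$ together with $d(n)^2=d(n)$ and $E[q(n)]=0$; both lead to the same $(p-p^2)|x(n)|^2+p\Delta^2/4$.
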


\begin{proof}	
According to the missing sampling model we assumed for DM coding, $y_{d}(n)$ is a random variable that takes one of the values $0$, $x(n)+\Delta/2$ and $x(n)-\Delta/2$ with probabilities $1-p$, $p/2$ and $p/2$ respectively. Hence, we get $E\{y_{d}(n)\}=px(n)$ and the first equation in (\ref{eq:TH1}) can be derived as (\ref{eq:PTH11})

\begin{align}\label{eq:PTH11}
E\{Y_{d}(m)\}&=E\{\sum_{n=0}^{N-1}y_{d}(n)\exp(-j\frac{2\pi}{N}nm)\}\nonumber
\\&=\sum_{n=0}^{N-1}E\{y_{d}(n)\}\exp(-j\frac{2\pi}{N}nm)\nonumber
\\&=\sum_{n=0}^{N-1}px(n)\exp(-j\frac{2\pi}{N}nm)\nonumber
\\&=pX(m)
\end{align}

This proves the expected value equation, for the variance equation we have

\begin{align}\label{eq:PTH12}
E\{|Y_{d}(m)-pX(m)|^2\}&=E\{(\sum_{n=0}^{N-1}(y_{d}(n)-px(n))e^{-j\frac{2\pi}{N}nm})\nonumber
\\&\times(\sum_{k=0}^{N-1}(y_{d}(k)-px(k))e^{j\frac{2\pi}{N}km})\}\nonumber
\\&=E\{\sum_{n}\sum_{k}(y_{d}(n)-px(n))\nonumber
\\&\times(y_{d}(k)-px(k))e^{j\frac{2\pi}{N}m(k-n)}\}\nonumber
\\&=\sum_{n=0}^{N-1}E\{(y_{d}(n)-px(n))^2\}\nonumber
\\&+\sum_{n}\sum_{k\neq n}E\{(y_{d}(n)-px(n))\nonumber
\\&\times(y_{d}(k)-px(k))\}e^{j\frac{2\pi}{N}m(k-n)}
\end{align}

Note that as $y_{d}(n)$ and $y_{d}(k)$ are assumed independent for $n\neq k$, the second term in (\ref{eq:PTH12}) equals zero. Again considering the simplified model we assumed for delta modulation sampling, the first term is calculated as (\ref{eq:PTH13})

\begin{align}\label{eq:PTH13}
\sum_{n=0}^{N-1}E\{(y_{d}(n)-px(n))^2\}&=\sum_{n=0}^{N-1} \{\frac{p}{2}\times((1-p)x(n)+\Delta/2)^2\nonumber
\\&+\frac{p}{2}\times((1-p)x(n)-\Delta/2)^2\nonumber
\\&+(1-p)\times p^2 x^2(n)\}\nonumber
\\&=\sum_{n=0}^{N-1} p(1-p)x^2(n)+p\frac{\Delta^2}{4}\nonumber
\\&=(p-p^2)\epsilon_{x}+pN\frac{\Delta^2}{4}
\end{align} 

And the proof is complete.
\end{proof}

\begin{theorem}{}\label{th:TH2}
Considering IMAT reconstruction formula given by (\ref{eq:IMAT}), $\lim_{k\to \infty} X_{k}(m)$ is an unbiased estimator of $X(m)$ for $0<\lambda<\frac{2}{p}$.
\end{theorem}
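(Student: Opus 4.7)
The plan is to analyze the IMAT iteration (\ref{eq:IMAT}) in the frequency domain, take expectation to eliminate the random sampling mask, and reduce the resulting stochastic recursion to a scalar linear recurrence whose unique fixed point is $X(m)$. Theorem \ref{th:TH1} already supplies $E\{Y_d(m)\}=pX(m)$, so the first term on the right-hand side of (\ref{eq:IMAT}) contributes $\lambda p X(m)$ after Fourier transformation and expectation; all the difficulty sits in the second term $(1-\lambda d(n))T(x_k)(n)$.

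My first step would be to argue that, beyond some iteration index, the exponentially decaying threshold $Th(k)=\beta e^{\alpha k}$ falls below the magnitude of every significant sparse component of $X$, so that $T$ acts as the identity on the support of the true spectrum and the iteration linearizes to $x_{k+1}(n)=\lambda y_d(n)+(1-\lambda d(n))\,x_k(n)$. This is the standard first-moment linearization used for IMAT-type schemes when only the bias, and not the variance, is being tracked.

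Next I would take expectations on both sides and invoke independence of the current mask sample $d(n)$ from the iterate $x_k(n)$ --- legitimate in the usual stochastic analysis where the randomness of $x_k$ comes from past, independent draws of the mask and the coding error. Using $E\{d(n)\}=p$ and $E\{y_d(n)\}=p\,x(n)$, pointwise expectation followed by the DFT yields the clean scalar recurrence
\begin{equation*}
E\{X_{k+1}(m)\}=\lambda p\,X(m)+(1-\lambda p)\,E\{X_k(m)\}.
\end{equation*}
Unrolling gives $E\{X_k(m)\}-X(m)=(1-\lambda p)^{k}\bigl(E\{X_0(m)\}-X(m)\bigr)$, which tends to zero for every $m$ and every initialization if and only if $|1-\lambda p|<1$, i.e.\ $0<\lambda<2/p$. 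In that regime $\lim_{k\to\infty}E\{X_k(m)\}=X(m)$, which is exactly the claimed unbiasedness.

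The main obstacle will be rigorously justifying the two simplifications above: the linearization of the nonlinear thresholding $T$, and the independence of $d(n)$ from $x_k(n)$ (which is only approximate, since $x_k$ depends on the full past history of the mask). I would handle the first by exploiting the annealing schedule on $Th(k)$ together with a minimum-nonzero-component assumption on $X$; the second is typically absorbed into a ``track the first moment only'' approximation that is standard in the IMAT literature \cite{IMAT} and can be cited rather than re-derived.
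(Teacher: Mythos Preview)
Your proposal is essentially the same argument as the paper's: take expectations of the IMAT recursion, invoke Theorem~\ref{th:TH1} and $E\{d(n)\}=p$ to obtain $E\{X_{k+1}(m)\}=\lambda p\,X(m)+(1-\lambda p)\,E\{X_k(m)\}$, and conclude geometric decay of the bias for $|1-\lambda p|<1$. The only noteworthy difference is in handling the threshold: the paper does a per-component case split (if $X_k(m)$ is picked by $Th(k)$ the error contracts by $(1-\lambda p)$; if not, the error resets to $(1-\lambda p)X(m)$, and since $Th(k)$ is strictly decreasing every component is eventually picked), whereas you argue more coarsely that $T$ eventually becomes the identity on the support. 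Both arrive at the same conclusion, and the two simplifications you flag as obstacles (linearizing $T$, independence of $d(n)$ from $x_k$) are exactly the ones the paper silently absorbs as well.
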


\begin{proof}
	To prove this theorem, we need to show that $E\{\lim_{k\to \infty} X_{k}(m)\}=X(m)$ or equivalently $\lim_{k\to \infty} E\{X_{k}(m)\}=X(m)$. To this end, we prove that the error sequence $e_k=X(m)-E\{X_{k}(m)\}$ forms a geometric progression with initial value $X(m)$ and common ratio $r=1-\lambda p$. Equivalently, IMAT reconstruction technique converges linearly (of order 1) to the original signal in the mean iff $|r|=|1-\lambda p|<1$ or $0<\lambda<2/p$. 
	
	Starting the algorithm from zero initial value, we have $X_{0}(m)=0$ and hence $e_0=X(m)-E\{X_{0}(m)\}=X(m)$. the statement is trivial for $k=0$. According to (\ref{eq:IMAT}) for $k=1$, we have $x_{1}(n)=\lambda y_{d}(n)$. Also applying theorem \ref{th:TH1} we get $E\{X_{1}(m)\}=\lambda pX(m)$ and hence $e_1=(1-\lambda p)X(m)$. Now, we rewrite the iterative IMAT formula (\ref{eq:IMAT}) in transform domain as 
	
	\begin{align}\label{eq:rabete2}
	X_{k+1}(m)=\lambda Y_{d}(m)+F\{(1-\lambda d(n))T(x_{k})(n)\}(m)
	\end{align}
	
	Now, taking $E\{.\}$ from both sides of (\ref{eq:rabete2}) and applying (\ref{eq:TH1}) we get
	
	\begin{align}\label{eq:rabete3}
	E\{X_{k+1}(m)\}&=\lambda pX(m)+F\{E\{(1-\lambda d(n))T(x_{k})(n)\}\}(m)
	\end{align}
	
	Utilizing (\ref{th:TH1}) in (\ref{eq:rabete3}) we get
	
	\begin{align}\label{eq:rabete4}
		E\{X_{k+1}(m)\}=\lambda pX(m)+(1-\lambda p)F\{E\{T(x_{k})(n)\}\}(m)
	\end{align}
	
	Utilizing (\ref{eq:rabete4}) we get
	
	\begin{align}\label{eq:rabete5}
	X(m)-E(X_{k+1}(m))&=(1-\lambda p)X(m)
	\\&-(1-\lambda p)E(F\{T(x_{k})(n)\}(m))\nonumber	
	\end{align}
	
	Now if $|X_{k}(m)| \geq TH(k)$, the thresholding operator can be omitted from the right side of (\ref{eq:rabete4}) which yields (\ref{eq:rabete6}) and the required statement results.
	
	\begin{align} \label{eq:rabete6}
		e_{k+1}&=X(m)-E(X_{k+1}(m))\nonumber
		\\&=(1-\lambda p)X(m)-(1-\lambda p)E(X_k(m))\nonumber
		\\&=(1-\lambda p) e_k
	\end{align}
	
	On the other hand, if $X_{k}(m)$ is not picked by the threshold, we get $E(F\{T(x_{k})(n)\}(m))=0$ and according to (\ref{eq:rabete4}) we get $e_{k+1}=(1-\lambda p)X(m)$. Hence, once a signal component is picked by the threshold in a specific iteration, its corresponding error sequence converges linearly to zero. As the threshold is strictly decreasing, all signal components will be gradually picked by the threshold and the proof is complete.
\end{proof}

 Theorem \ref{th:TH2} proves that $X_{k+1}(m)$ is a random variable with its expected value approaching $X(m)$ as $k \to \infty$. But in order to prove perfect reconstruction/convergence of the IMAT algorithm, we also need to show that the variance of this unbiased estimator approaches zero as $k \to \infty$. Theorem 2 explains this variance fluctuation issue as k approaches infinity. Before providing the formal statement for theorem \ref{th:TH3} lets define the sparse signal support to be the set of all its nonzero frequency components as $Supp=\{m|X(m) \neq 0\}$. 
 
 \begin{theorem}{}\label{th:TH3}
 	Assume that the threshold picks $X_k(m_j)$ in the k'th iteration of the IMAT algorithm, this increases the spectrum variance if $m_j \in Supp$ and increases the variance for $m_j \notin Supp$. 
 \end{theorem}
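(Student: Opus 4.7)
The strategy is to analyze how the variance $\text{Var}(X_{k+1}(m))$ evolves under the IMAT recursion, focusing on how the thresholding decision at iteration $k$ perturbs this variance. My starting point would be the frequency-domain iteration already derived in the proof of Theorem \ref{th:TH2}, namely $X_{k+1}(m)=\lambda Y_{d}(m)+F\{(1-\lambda d(n))T(x_{k})(n)\}(m)$. Because the update is affine in the quantities $\lambda y_d$ and $(1-\lambda d(n))T(x_k)$, its variance admits a clean decomposition once we condition on the thresholding event $\{|X_k(m_j)|\geq Th(k)\}$.

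First, I would decompose $X_{k+1}(m)$ into two contributions: one from the DM sampling noise embodied in $Y_d$, and one from the random mask $d(n)$ acting on $T(x_k)(n)$. The first piece is governed entirely by Theorem \ref{th:TH1}, contributing the deterministic term $\lambda^2\bigl((p-p^2)\epsilon_{x}+pN\Delta^2/4\bigr)$ to the variance. The second piece expands as $T_k(m)-\lambda F\{d(n)t_k(n)\}(m)$ with $t_k=T(x_k)$ and $T_k=F\{t_k\}$; treating $t_k$ as determined by prior randomness and using the iid $\mathrm{Bernoulli}(p)$ structure of $d$, I can evaluate the variance of $F\{d(n)t_k(n)\}(m)$ by a Parseval-type identity as $p(1-p)\,\epsilon_{t_k}$, where $\epsilon_{t_k}=\sum_{n}|t_k(n)|^2$.

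Next, I would contrast the two regimes selected by the support status of $m_j$. If $m_j\in\mathrm{Supp}$ and the threshold picks $X_k(m_j)$, then $T_k(m_j)=X_k(m_j)$ equals the true spectral value $X(m_j)$ plus a zero-mean fluctuation inherited from earlier iterations; when this fluctuation is passed through the aliasing operator $\lambda F\{d(n)t_k(n)\}(m)$ it injects additional $|X_k(m_j)|^2$-weighted mass into $\epsilon_{t_k}$, and hence into the spectral variance at every bin, in the direction claimed by the theorem. If instead $m_j\notin\mathrm{Supp}$, the picked value is a pure-noise coefficient whose propagation through the same aliasing operator shifts $\epsilon_{t_k}$ in the opposite (or quantitatively distinct) direction, which is precisely the second half of the statement. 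The bookkeeping amounts to tracking the signed increment $\Delta\epsilon_{t_k}=|X_k(m_j)|^2$ on the in-support pick versus the out-of-support pick and folding it back through the $p(1-p)$ aliasing factor.

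The main obstacle will be handling the nonlinearity of the thresholding operator $T(\cdot)$ rigorously. Because $T$ is indicator-based hard thresholding, it cannot be commuted with expectation or variance directly; instead I would condition on the picked support at iteration $k$, exactly as the theorem phrasing invites, and compute conditional variances. A secondary technical issue is that $Y_d(m)$ and $d(n)T(x_k)(n)$ share the sampling mask, so the cross-covariance between them is not automatically zero; to dispose of it cleanly I would use the independence of the coding error $q(n)$ from $d(n)$ (asserted just before Theorem \ref{th:TH1}) to split $y_d(n)=d(n)x(n)+d(n)q(n)$ and handle each piece separately. Once these two subtleties are resolved, the remainder of the proof reduces to algebraic manipulation of the variance recursion derived above.
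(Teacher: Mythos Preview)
Your approach has a genuine gap that would cause the monotonicity argument to come out wrong. You propose to decompose $X_{k+1}$ into two additive pieces --- $\lambda Y_d(m)$ and $F\{(1-\lambda d(n))t_k(n)\}(m)$ --- and compute the variance of each separately, obtaining $\lambda^2\bigl((p-p^2)\epsilon_x + pN\Delta^2/4\bigr)$ from the first and $\lambda^2 p(1-p)\epsilon_{t_k}$ from the second. But under this decomposition \emph{both} in-support and out-of-support picks add $|X_k(m_j)|^2$ to $\epsilon_{t_k}$ and leave $\epsilon_x$ unchanged, so both would appear to \emph{increase} the variance. The theorem (read in the intended sense, correcting the evident typo) asserts that picking $m_j\in\mathrm{Supp}$ \emph{decreases} the variance; your bookkeeping cannot produce that sign.

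The missing idea is that the cross term you flag as a ``secondary technical issue'' is in fact the heart of the matter. Because $y_d(n)=d(n)(x(n)+q(n))$ and the second piece contains $-\lambda d(n)t_k(n)$, the randomness from the mask acts on the \emph{difference} $x(n)-t_k(n)$, not on $x$ and $t_k$ separately. The paper exploits this by writing $T(x_k)=p_{k+1}+s_{k+1}$ (support and non-support parts) and $x=p_{k+1}+r_{k+1}$ (reconstructed part and residual); the common $p_{k+1}$ cancels in $x-t_k=r_{k+1}-s_{k+1}$, and since $r_{k+1}$ and $s_{k+1}$ have disjoint frequency supports, one obtains
\[
\sigma_{k+1}^2 \;=\; \lambda^2(p-p^2)\,\epsilon_{r_{k+1}} \;+\; \lambda^2(p-p^2)\,\epsilon_{s_{k+1}} \;+\; \lambda^2 pN\tfrac{\Delta^2}{4}.
\]
Now the dichotomy is immediate: a correct pick $m_j\in\mathrm{Supp}$ removes energy from the residual $r_{k+1}$ and lowers $\sigma_{k+1}^2$, while a spurious pick $m_j\notin\mathrm{Supp}$ adds energy to $s_{k+1}$ and raises it. Your proposed fix of splitting $y_d=d\cdot x + d\cdot q$ isolates the quantization term but leaves the $d\cdot x$ / $d\cdot t_k$ correlation untouched; you need instead to combine them into $d\cdot(x-t_k)$ before invoking Theorem~\ref{th:TH1}.
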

 
 \begin{proof}
 	Let's decompose $T(x_k(n))$ as 
 	
 	\begin{align}\label{eq:rabete7}
 	T(x_k(n))=p_{k+1}(n)+s_{k+1}(n)
 	\end{align}
 	
 	In which $p_{k+1}(n)$ only consists of signal support $Supp$ components and $s_{k+1}(n)$ includes other components picked by the threshold. In other words, $p_{k+1}(n)$ is the portion reconstructed by the algorithm up to the k'th iteration and $s_{k+1}(n)$ is the non-support portion picked mistakenly by the threshold due to non-zero spectrum variance. Now, let's also decompose $x(n)$ as the sum of its reconstructed portion $p_{k+1}(n)$ and a residual $r_{k+1}(n)$ as
 	\begin{align}\label{eq:rabete8}
 	x(n)=p_{k+1}(n)+r_{k+1}(n)
 	\end{align}
 	
 	Now, note that according to the missing sampling model for DM coding we have $y_{d}(n)=d(n)(x(n)+q(n))$. Substituting (\ref{eq:rabete7}) and (\ref{eq:rabete8}) in (\ref{eq:IMAT})gives
 	
 	\begin{align} \label{eq:rabete9}
 	x_{k+1}(n)&=\lambda y_d(n)+(1-\lambda d(n))T(x_k(n))\nonumber
 	\\&=\lambda d(n)(x(n)+q(n))+(1-\lambda d(n))T(x_k(n))\nonumber
 	\\&=\lambda d(n)(p_{k+1}(n)+r_{k+1}(n))+\lambda d(n)q(n)\nonumber
 	\\&+(1-\lambda d(n))(p_{k+1}(n)+s_{k+1}(n))\nonumber
 	\\&=p_{k+1}(n)+s_{k+1}(n)+\lambda d(n)q(n)\nonumber
 	\\&+\lambda d(n)r_{k+1}(n)-\lambda d(n)s_{k+1}(n)
 	\end{align}
 	
 	Now, note that the first two terms in (\ref{eq:rabete9}) are not multiplied by the sampling mask $d(n)$ and hence their frequency transform are deterministic and do not contribute to the spectrum variance. According to theorem \ref{th:TH1}, the contribution of $\lambda d(n)r_{k+1}(n)$ and $\lambda d(n)s_{k+1}(n)$ terms to the spectrum variance equals $\lambda ^2 (p-p^2) \epsilon _{r_{k+1}}$ and $\lambda ^2 (p-p^2) \epsilon _{s_{k+1}}$ respectively. Similarly, according to theorem \ref{th:TH1}, the contribution of the coding error to the spectrum variance equals $\lambda ^2 pN \frac{\Delta ^2}{4}$. This yields that
 	\begin{align} \label{eq:rabete10}
 	\sigma ^2_{k+1}&=E\{(X_{k+1}(m)-E\{X_{k+1}(m)\})^2\}\nonumber
 	\\&=\lambda ^2 (p-p^2) \epsilon _{r_{k+1}}\nonumber
 	\\&+\lambda ^2 (p-p^2) \epsilon _{s_{k+1}}\nonumber
 	\\&+\lambda ^2 pN \frac{\Delta ^2}{4}
 	\end{align}
 	
 	In (\ref{eq:rabete10}), $\epsilon_{r_{k+1}}$ and $\epsilon_{s_{k+1}}$ denote the energies of the residual and mistakenly picked frequency components respectively. Now, each mistakenly picked component $m_j \notin Supp$ increases $\epsilon_{s_{k+1}}$ and consequently the spectrum variance. Similarly, for a correctly picked signal component $m_j \in Supp$, $\epsilon_{r_{k+1}}$ and consequently the spectrum variance $\sigma ^2_{k+1}$ is decreased. The above discussion completes the proof. 
\end{proof}

\begin{remark} \label{rm:RM1}
	As stated previously, due to the non-zero spectrum variance, $X_k(m_j)$ is non-zero for $m_j \notin Supp$. Hence, the threshold parameters must be adjusted such that the threshold value always keeps above the standard deviation of the spectrum (e.g. $Th(k) \geq \gamma \sigma _k$, $\gamma >1$) to prevent the algorithm from picking incorrect frequency components. In this case, $\epsilon _{s_{k+1}}=0$ and the spectrum variance is decreasing in each iteration $\sigma^2_{k+1} \leq \sigma^2_{k}$.
\end{remark}

\begin{corollary}
Considering theorem \ref{th:TH2}, we conclude that IMAT estimation bias approaches zero as $k$ approaches infinity. On the other hand, the variance of IMAT estimation is decreasing provided that the condition in Remark \ref{rm:RM1} always holds. Now considering the fact
that the Mean Square Error (MSE) of the estimator is given by (\ref{eq:rabete11}) 
\begin{align} \label{eq:rabete11}
MSE_k&=E\{(X_k(m)-X(m))^2\}\nonumber
\\&=(E\{X_k(m)\}-X(m))^2+\sigma^2_k
\end{align}
Now as both terms in (\ref{eq:rabete11}) are decreasing, we conclude that $MSE_k$
is also decreasing. In other words, the cost function for
IMAT algorithm is the Mean Square Error of the estimated
signal in transform domain. As this cost function is convex and
decreasing, it will converge to the global minimum.  
\end{corollary}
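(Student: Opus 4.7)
The plan is to combine the two ingredients established in Theorem \ref{th:TH2} and Theorem \ref{th:TH3} (with Remark \ref{rm:RM1}) through the standard bias--variance decomposition of the mean squared error. First I would invoke Theorem \ref{th:TH2} to conclude that the bias sequence $b_k(m) = E\{X_k(m)\} - X(m)$ is a geometric progression with common ratio $1-\lambda p$, so for the admissible range $0 < \lambda < 2/p$ we have $b_k(m) \to 0$ monotonically in absolute value. Second, I would appeal to Theorem \ref{th:TH3}: under the adaptive schedule of Remark \ref{rm:RM1}, the threshold is kept above $\gamma \sigma_k$, so no off-support atom is picked ($\epsilon_{s_{k+1}} = 0$), while each correctly picked on-support atom strictly decreases $\epsilon_{r_{k+1}}$. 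Substituting into equation (\ref{eq:rabete10}) gives $\sigma^2_{k+1} \le \sigma^2_k$ at every iteration.

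Next I would write out (\ref{eq:rabete11}) and observe that both summands are non-increasing in $k$, the first tending to zero geometrically and the second being a bounded, monotone non-increasing sequence. By the monotone convergence theorem the MSE sequence converges. To identify the limit, I would note that as the threshold eventually descends below every nonzero $|X(m)|$, every support component is picked in finite time, so $\epsilon_{r_{k+1}} \to 0$; combined with the bias decay, the asymptotic MSE reduces to the irreducible coding-noise floor $\lambda^2 p N \Delta^2/4$ inherited from the quantization term in Theorem \ref{th:TH1}. This is exactly the behavior the partial-smoothing preprocessing in IMATDM is designed to suppress.

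The main obstacle I anticipate is in making the monotonicity of $\sigma^2_k$ globally rigorous rather than per-iteration. Theorem \ref{th:TH3} describes the sign of the change only when a single atom is picked; I must combine this with the schedule $Th(k) = \beta e^{\alpha k}$ and the safeguard $Th(k) \ge \gamma \sigma_k$ to rule out the pathological scenario where the threshold crosses below the standard deviation before all support components have been captured, which would admit spurious atoms and increase $\epsilon_{s_{k+1}}$. A companion technical point is justifying that every $m \in \mathrm{Supp}$ is visited in finite time: since the bias at a not-yet-picked frequency stays near $X(m)$ while $Th(k)$ decays exponentially, each support atom is eventually dominated by the thresholding decision and selected. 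Once these two points are verified, the final claim that the decreasing, convex transform-domain MSE converges to its global minimum follows immediately from the bias--variance identity and requires no further argument.
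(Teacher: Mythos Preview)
Your proposal is correct and follows essentially the same route as the paper: invoke Theorem~\ref{th:TH2} for vanishing bias, Theorem~\ref{th:TH3} with Remark~\ref{rm:RM1} for non-increasing variance, and combine them via the bias--variance identity~(\ref{eq:rabete11}) to obtain a monotone MSE. In fact you go further than the paper does---the paper's corollary stops at ``both terms decreasing $\Rightarrow$ MSE decreasing $\Rightarrow$ converges to the global minimum,'' whereas you additionally identify the asymptotic noise floor $\lambda^{2}pN\Delta^{2}/4$ and flag the rigor gaps (global monotonicity of $\sigma_k^2$ under the exponential schedule, finite-time capture of every support atom) that the paper leaves implicit.
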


\begin{remark}
	If the threshold value is decreased too fast, there
	exists the risk of picking signal components that are not in
	signal support. This will increase the spectrum variance or
	equivalently degrade the final quality of the reconstructed
	signal. On the other hand, slower decrease of the threshold value, slows down the algorithm’s convergence rate. Although the threshold doesn’t necessarily need to decrease exponentially, simulation results conveyed that the exponential parameters $(\alpha,\beta)$ can always be optimized for acceptable performance.
\end{remark}
  
\section{SIMULATION RESULTS}
In this section, we first investigate the performance of the proposed missing sampling approach to DM coding in subsection A. In subsection B, we show superior performance of the proposed IMAT and IMATDM algorithms for DM/ADM reconstruction in comparison with both the classic low pass and some other sparsity promoting  reconstruction algorithms.

 It should be noted that we compare the performance of different sampling/reconstruction schemes in terms of the average reconstruction SNR and Success Rates (SR) achieved on a statistically sufficient number of voice frames. Each frame is 20ms long and is extracted from a high quality voice recorded at 48kHz. Each frame is 20ms long and is extracted from a database of high quality English
 voice (55\% male, 45\% female) sampled at 48kHz. Hence, the sampling frequency of the input voice is set
 at 48kHz to ensure that no frequency component in human audible range (20Hz to 20 kHz) is missed.
 
 To gain an insight on the subjective quality achieved by different techniques, we also report PESQ (Perceptual Evaluation of Speech Quality) \cite{PESQ} scores achieved on the whole 10s voice recording. The PESQ score ranges from -0.5 to 4.5 with 4.5 standing for the best subjective quality.
 
 The SNR values reported as the quality measure in this section are calculated as (\ref{eq:rabete12})
 
 \begin{align} \label{eq:rabete12}
 SNR= \frac{\sum_{i=0}^{N-1} x(i)^2}{\sum_{i=0}^{N-1}\left( x_k(i)-x(i) \right)^2}
 \end{align}
 
 The reported SR values are also defined as the percentage of the voice frames reconstructed with final SNRs above a preset threshold value.
 
 Finally, note that a central contribution in this paper is to replace the traditional low pass signal assumption
 (e.g. 3.3 kHz as the bandwidth for voice) by the sparse signal assumption using the family of IMAT based
 reconstruction techniques. Hence, unlike the common assumption of 3.3 kHz for voice, we avoid the antialiasing
 filter on 3.3 kHz to be able to capture frequency components beyond 3.3 kHz. Consequently, the
 input sampling frequency must be so high (e.g. 48 kHz) as to make sure that all components in the human
 audible range [20-20k] Hz are accommodated. Although these high frequency components may seem
 negligible, but we observe that utilizing the sparse assumption, we achieve noticeable SNR and PESQ
 improvements in comparison with the traditional low pass based reconstruction techniques.

\subsection{Delta Modulation vs. Missing Sampling}

As mentioned in the previous sections, we utilize the missing sampling model for DM coding both in reconstruction and analytical studies. Off course there exists some modeling mismatch in our proposed approach. To study the effects of this modeling mismatch, we compare these two sampling methods in terms of the average reconstruction SNR values achieved by them. It should be noted that in this subsection, we compare the reconstruction performance of the two sampling patterns/masks of missing and DM sampling. Hence we need to discard the quantization/coding error. The two sampling patterns are compared at equal sampling rates.

For DM sampling, as mentioned before, the sampling pattern ($d(n)$) is extracted according to (\ref{eq:mask}). But in order to discard the coding errors in reconstruction, $y_d(n)$ is replaced by $x_d(n)=d(n).x(n)$ in (\ref{eq:IMAT}). For the missing sampling approach, the mask elements $m(n)$'s are generated by iid random variables of $Bernoulli(p)$ distribution in which $p$ is the same sampling rate achieved by DM coding. Again, $y_d(n)$ is replaced by $x_m(n)=m(n).x(n)$ for reconstruction without quantization error.  

The comparison is performed at different delta values of $\Delta=$ 0.001, 0.005, 0.01 and 0.03 that yield the sampling rates of $p=$ 0.46, 0.70, 0.80 and 0.88, respectively. The simulation results has been shown in fig \ref{fig:wonoise}. According to this figure, as the delta parameter approaches 0.03, different reconstruction methods achieve almost equal results in terms of SNR for both random sampling and our proposed DM based missing sampling.

\begin{figure}[!ht]
	\centering
	\includegraphics[width=0.5\linewidth]{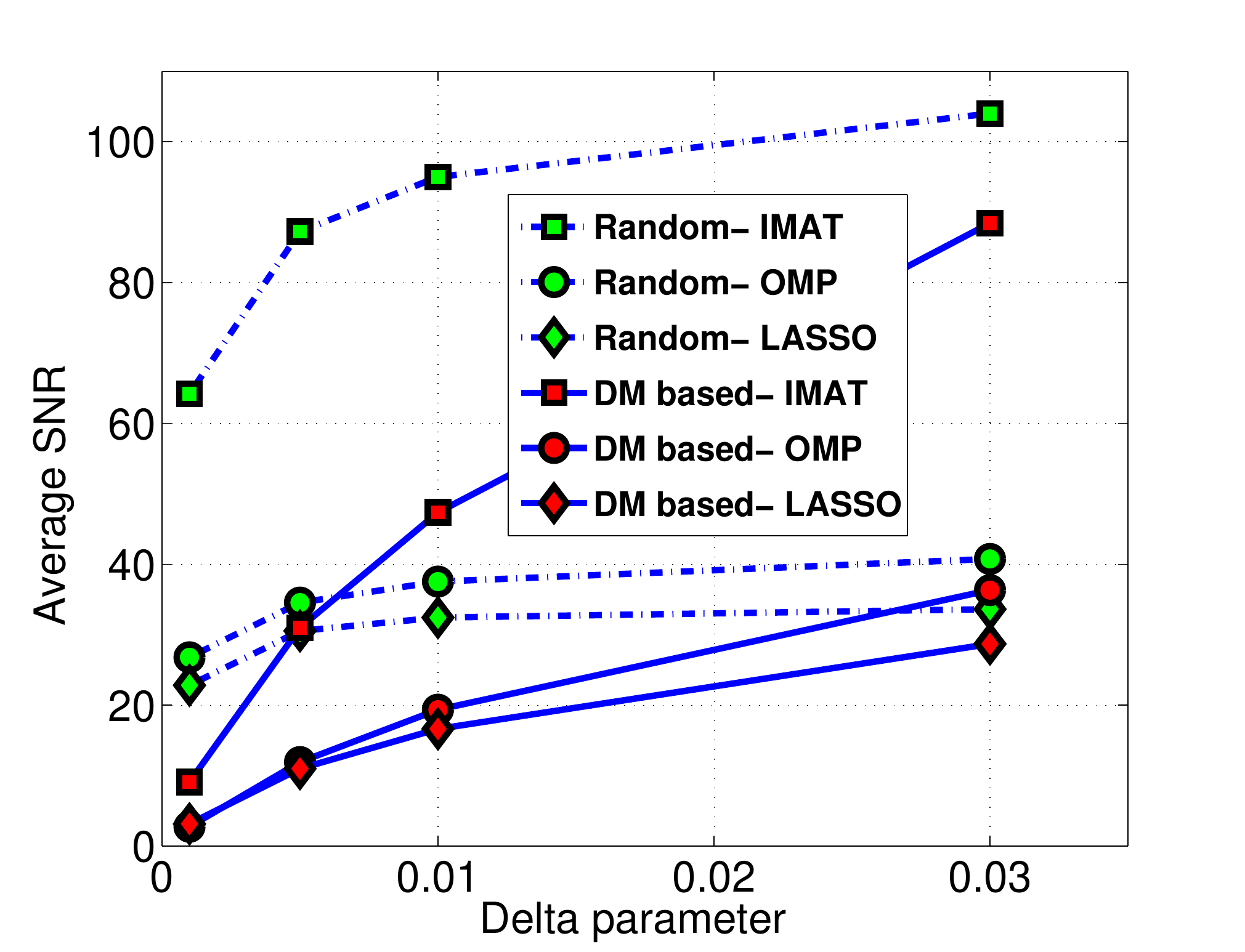}
	\caption{Comparsion between the Ideal Missing Sampling and DM Sampling in Terms of Reconstruction SNR (dB)}
	\label{fig:wonoise}
\end{figure}

It can be concluded that DM is a suitable sampling method and its performance approaches the ideal case of random sampling as the sampling rate increases. 

\subsection{Performance Comparisons between Different Reconstruction Methods}

In this subsection we compare the performance of the proposed IMAT and IMATDM algorithms with both the classic low pass reconstruction technique and other sparsity promoting algorithms of  OMP and LASSO. To this end, we apply DM coding at different $\Delta$ values leading to different sampling rates. It should be noted that unlike the previous subsection the coding error is taken into account in the simulations of this subsection to facilitate a more meaningful comparison between different reconstruction techniques.

Figure \ref{fig:deltadmwnoise}  compares the average SNR (dB) values achieved by these algorithms for DM sampling with different $\Delta$ values. The comparison is performed at delta values of $\Delta = $ 0.001, 0.005, 0.01, 0.02, and 0.03 that lead to sampling rates of $ p =$ 0.46, 0.70, 0.80, 0.86, and 0.88 respectively. 

\begin{figure}[!ht]
	\centering
	\includegraphics[width=0.45\linewidth]{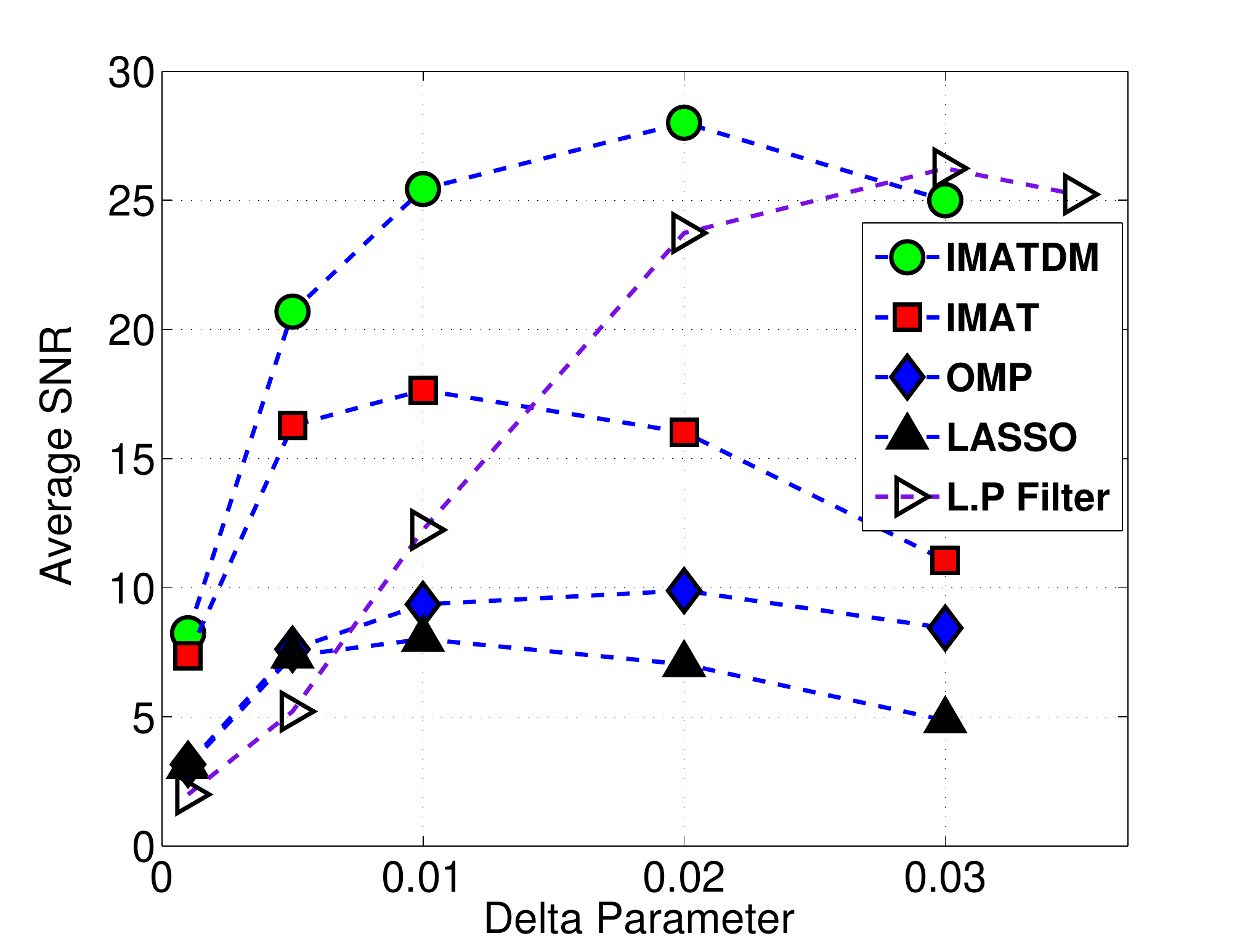}
	\caption{Comparison between Different Reconstruction Methods for DM Sampling in Terms of Reconstruction SNR (dB)}
	\label{fig:deltadmwnoise}
\end{figure}

The corresponding Success Rate (SR) values defined as the ratio of successfully reconstructed voice frames in $(\%)$ are also reported in table \ref{jadvala} . In this table, the SNR threshold considered as successful reconstruction is set at 15 dBs.
 \begin{table}[ht]
 	
 	\centering
 	\caption{Success rates$(\%)$ achieved by different reconstruction methods for DM (SNR values exceeding 15 $dB$ are considered as successful reconstruction) }
 	\label{jadvala}
 	
 	\begin{tabular}{c}
 		\includegraphics[width=0.6\linewidth]{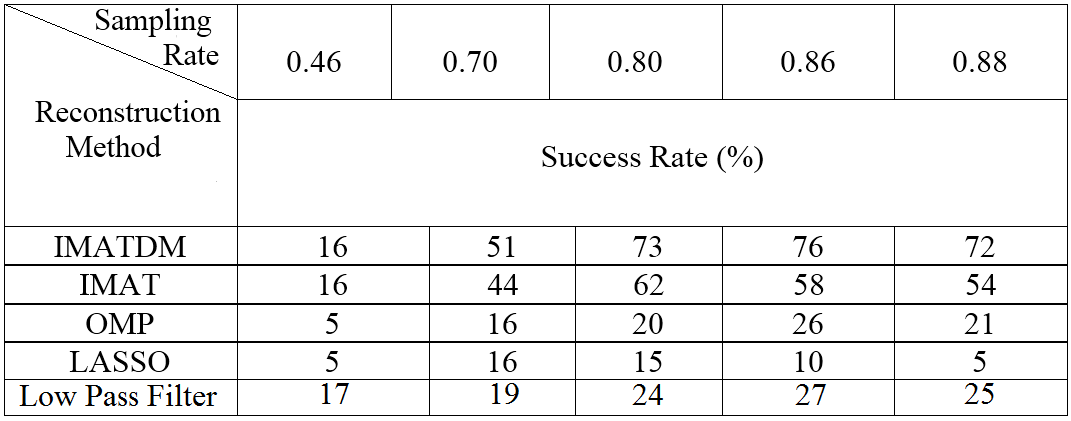}
 	\end{tabular}
 	
 \end{table}

 It can be concluded that the proposed reconstruction algorithms (IMAT and IMATDM) clearly outperform both sparsity promoting algorithms (OMP and LASSO) and low pass technique regarding SNR and SR.

 Not that as an increase in $\Delta$ leads to an increase in both sampling rate and the coding error, there is an optimum value for $\Delta$ in each scenario. This phenomenon is also observed in fig. \ref{fig:deltadmwnoise}.

 Finally, the PESQ scores achieved by IMATDM, IMAT, LASSO, OMP, and the classical Low pass techniques for DM reconstruction at p= 0.8 are 2.33, 2.08, 1.44, 1.1, and 1.49, respectively.
 
 In order to achieve more comprehensive results, we also investigate the performance of the proposed reconstruction algorithms on the missing sampling based on ADM. It should be noted that as the $\Delta$ parameter is changed adaptively according to the input signal for ADM, the sampling rate for ADM is less dependent on the initial value of $\Delta$ and is almost $p=$ 0.5 . The reconstruction SNR values achieved for ADM was 39.7, 31.9, 12.9, 14.21, and 39.5 dBs for IMATDM, IMAT, OMP, LASSO, and the classical Low Pass techniques, respectively. Similarly, the achieved SR values are 88.5, 52.1, 19.5, 9.5 and 40.8 for IMATDM, IMAT, OMP, LASSO, and the classical Low Pass techniques, respectively. Note that the SNR threshold considered as successful reconstruction is set at 20 dBs for SR calculations. 
 \\
 
 Finally, the PESQ scores achieved by IMATDM, IMAT, OMP, LASSO, and the classical Low pass technique are 3.1, 2.67 ,2.08 ,2.12 , and 2.9 in this scenario. 
 \\
 
 It is observed that, sampling by ADM and reconstructing by IMATDM achieves superior performance in comparison with other sampling/reconstruction pairs. It is also observed that the proposed IMATDM algorithm outperforms the basic IMAT by an average amount of 7.83 dBs.

\section{CONCLUSION}
In this paper, we proposed the family of iterative methods with adaptive thresholding (IMAT) for sparsity promoting reconstruction of Delta Modulated (DM)
and Adaptive Delta Modulated (ADM) voice signals. By adding a prior partial smoothing block to IMAT, we proposed IMATDM algorithm for improved reconstruction quality for DM sampling. Utilizing our proposed missing sampling approach to delta modulation, we proved the convergence of IMAT and IMATDM algorithms analytically. The proposed algorithms also proved promoting reconstruction techniques of OMP and LASSO in terms of reconstruction SNR, Success Rate and PESQ scores by simulations. 

\bibliographystyle{IEEEtran}
\bibliography{Citations}

\end{document}